\documentclass[10pt]{article}

\usepackage[T1]{fontenc}
\usepackage[utf8]{inputenc}
\usepackage[english]{babel}
\usepackage{amsmath,latexsym,amsfonts,amsthm,bm,mathtools}
\usepackage{amssymb}
\usepackage{graphicx}
\usepackage[top=1.5in, bottom=1.5in, left=1.25in, right=1.25in]{geometry}
\usepackage{booktabs,url,tikz}
\usepackage[makeroom]{cancel}
\usepackage{dsfont}
\usepackage{relsize}
\usepackage{multirow}

\usepackage{algorithm}
\usepackage{algpseudocodex}
\usepackage{bbm}
\usepackage{subcaption}

\theoremstyle{definition}
\newtheorem{definition}{Definition}

\theoremstyle{plain}
\newtheorem{proposition}{Proposition}

\theoremstyle{remark}
\newtheorem{remark}{Remark}

\newcommand{\mathDef}{\overset{\mathrm{def}}{=}}

\title{Non-ignorable fuzziness in granular counts: \\the case of RNA-seq data}
\author{Antonio Calcagn\`{i}$^{1\ast}$, Arianna Consiglio$^{2}$, Przemys{\l}aw Grzegorzewski$^{3}$, Corrado Mencar$^{4}$ \\\\
		\footnotesize{\sl $^{1}$ University of Padova, \sl $^{2}$ National Research Council, \sl $^{3}$ Warsaw University of Technology, \sl $^{4}$ University of Bari ``A. Moro''} \\
		\footnotesize{$\ast$ E-mail: antonio.calcagni@unipd.it}
	}

\date{}

\begin{document}

\maketitle

\begin{abstract}
RNA-seq count data are often affected by read-to-gene alignment ambiguity, especially in high-dimensional transcriptomics. This type of ambiguity can be conveniently expressed through granular counts, namely fuzzy-valued observations of latent discrete quantities. We study a class of fuzzy-reporting mechanisms and show that, when reporting exploits graded membership, ignorability fails generically, leading to a coarsening-not-at-random structure. A hierarchical model is then introduced as a tractable instance of this construction and illustrated using RNA-seq data.\\

\noindent {Keywords:} RNA-seq count data, fuzzy counts, coarsening-not-at-random, Bayesian hierarchical model\\[0.1cm]
\textsc{MSC}: 62A86, 62F15, 62P10 \\[0.35cm]
\textsc{Supplementary Material}: Additional results are available in the supplementary material accompanying this submission ($\hookrightarrow$ \texttt{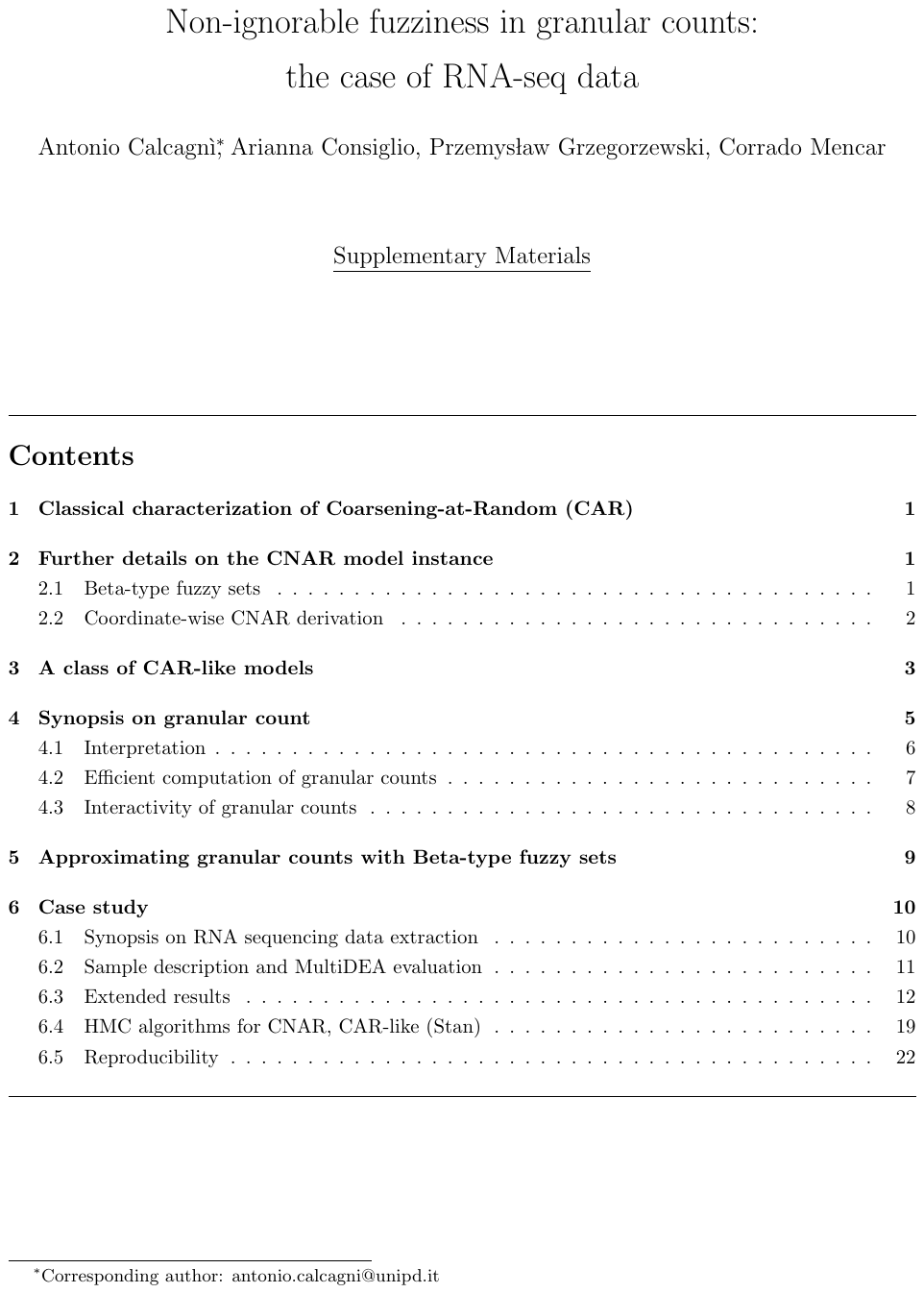}).
\end{abstract}

\vspace{2cm}

\section{Introduction}\label{sec1}

RNA-seq provides a natural motivating example for the statistical analysis of ambiguous count data. In high-dimensional transcriptomic settings, short reads are often compatible with multiple genes or isoforms, so that read-to-gene assignments are not always uniquely determined. While several strategies have been proposed to handle multireads, the resulting ambiguity is often treated as a technical problem rather than as a form of uncertainty intrinsic to the alignment itself \cite{ji2011bm}. In fact, this type of ambiguity can be naturally viewed as epistemic uncertainty, reflecting limitations in information and representation (for instance, because of shared exonic structure, sequence similarity, polymorphisms, incomplete annotation), and it cannot be modeled as measurement noise without losing some of its features \cite{consiglio2016fuzzy,deshpande2023rna}. From this viewpoint, ambiguous read assignment constitutes a form of \textit{granular counting}, resulting in fuzzy counts over competing loci or transcripts \cite{consiglio2016fuzzy,mencar2020granular}.

Although RNA-seq provides the main motivating example, similar issues arise whenever a latent quantity is observed through ambiguous allocation or graded compatibility with multiple alternatives. This is the case, for instance, in pooled testing, where multiple underlying states (e.g., different combinations of positives) may lead to the same observed outcome, and in multi-target tracking, where observations must be associated with competing targets. In all such cases, the resulting uncertainty can be formally represented by fuzzy counts $\tilde y$, defined via a possibility distribution $\xi_{y}:\mathbb N_0\to[0,1]$. By studying a class of fuzzy-reporting mechanisms linking an underlying precise count $Y\sim \mathcal F_{Y}(y;\boldsymbol\theta)$ to its fuzzy counterpart, we provide the main theoretical result: whenever reporting genuinely exploits graded membership, the induced mechanism is generically non-ignorable. Specifically, it behaves as a \textit{coarsening-not-at-random} (CNAR) process rather than a \textit{coarsening-at-random} one \cite{gill1997coarsening}, thus justifying a hierarchical approach where the observed fuzzy count is treated as an imprecise realization of the latent count. 

The remainder of this paper is organized as follows. Section~\ref{sec2} introduces basic notation and tools used throughout the paper. Section~\ref{sec3} shows that the fuzzy-reporting mechanism behaves as CNAR and motivates a hierarchical model for granular counts. Section~\ref{sec4} presents a real data application involving RNA-seq data and Section~\ref{sec5} concludes the paper by summarizing its main findings. Throughout the paper, references labeled S\textit{X}, or S\textit{X.Ya} (e.g., Figure S1, Section S6.3a) refer to the Supplementary Materials.

\section{Preliminaries}\label{sec2}

This section introduces the main definitions, notation, and technical tools used throughout the paper. In what follows, $S \mathDef \mathbb{N}_0$ denotes the state space of non-negative integer counts, $\mathcal{S} \mathDef \mathcal{P}(S)$ its power set, and $S_K \mathDef \{0,1,\ldots,K\} \subset S$ the truncated count space up to level $K$, for some fixed $K \geq 0$.

\begin{definition}
\textit{Fuzzy set.} A fuzzy subset $\tilde A$ of $S$ is specified by its membership function $\xi:S\to [0,1]$, where $\xi(y)$ quantifies the degree to which $y\in \tilde A$. The support of a fuzzy subset is $\text{supp}(\tilde A) \mathDef \{y\in S: \xi(y)>0\}$ while $\text{core}(\tilde A) \mathDef \{y\in S: \xi(y)=1\}$ is its core. In general, for $\alpha\in]0,1]$, the set $ A_\alpha \mathDef \{y\in S: \xi(y)\geq\alpha\}$ is the $\alpha$-cut of $\tilde A$. We assume that $\xi$ is {normalized}, i.e. $\sup_{y\in S}\xi(y)=1$, and all the membership functions are meant to be $\mathcal S$-measurable. There are several parametric families for specifying membership functions $\xi$ (e.g., triangular, trapezoidal) and, among them, the {beta-type} family (see Section S2.1) provides a flexible unimodal shape on a bounded support and admits an interpretable parameterization in terms of location $c\in [0,K]$ and precision $h>0$ \cite{calcagni2025bayesianize}. 
\end{definition}

\begin{remark}
When a sample of fuzzy observations $\{\tilde y_i\}_{i=1}^n$ is available, the parameters of a Beta-type fuzzy set $c$ and $h$ assume the role of statistics of the data (not to be confused with the statistical model parameters). 
\end{remark}

\begin{definition}
\textit{Statistical experiment.} $Y:(\Omega,\mathcal A,\mathbb P)\to(S,\mathcal S)$ is an $(\mathcal A-\mathcal S)$ measurable map (a random variable). For $\theta\in\Theta$, $\mathbb P_\theta$ denotes the induced distribution of $Y$ on $(S,\mathcal S)$, with $(\mathbb P_\theta)_{\theta\in\Theta}$ being a parametric family of probability measures on $(S,\mathcal S)$. The triple $(S,\mathcal S,\mathbb P_\theta)$ defines the usual statistical experiment.
\end{definition}

\begin{definition}
\textit{Space of bounded and measurable functions.} Let \\$\mathcal B_b(S,\mathcal S)\mathDef\{f:S\to\mathbb R \mid f \text{ is }\mathcal S\text{-measurable},~ \sup_{y\in S}|f(y)| <\infty\}$. Given a probability measure $\mathbb P_\theta$ on $(S,\mathcal S)$, the functional $C_\theta:\mathcal B_b(S,\mathcal S)\to\mathbb R$ defined by $C_\theta(f)\mathDef \sum_{y\in S} f(y)\mathbb P_\theta[Y=y]$ is linear and positive. The subset $M \mathDef \{ \xi \in \mathcal B_b(S,\mathcal S) \mid 0 \le \xi \le 1,~ \sup_{y\in S}\xi(y) = 1 \}$ is a normalized slice of the positive cone of $\mathcal B_b(S,\mathcal S)$, which is closed under $\vee$ (pointwise maximum), i.e. $(\xi \vee \xi')(y) = \max\{\xi(y),\xi'(y)\}$. If equipped with a $\sigma$-algebra $\mathcal M$ -- for instance, the cylindrical one generated by the evaluation maps $e_y:M\to[0,1]$, $e_y(\xi)\mathDef \xi(y)$-- $(M,\mathcal M)$ is a measurable space.
\end{definition}

\begin{definition}
\textit{Fuzzy sets \`a la Le Cam}. If $\{\xi_i\}_{i=1}^n \subseteq M$, then $M\subset\mathcal B_b(S,\mathcal S)$ is naturally framed within Le Cam's single-stage experiment \cite{gil1993statistical}, with $M$ playing the role of a class of measurable membership functions. In this setting, $C_\theta(\xi) \mathDef \sum_{y\in S} \xi(y)\mathbb P_\theta[Y=y]$ coincides with the probability of a fuzzy subset in the sense of \cite{zadeh1968probability}, and it is interpreted as the degree of consistency of the fuzzy subset $\xi$ with respect to $\mathbb P_\theta$. 
\end{definition}

\begin{remark}
Unlike classical spaces of fuzzy numbers on $\mathbb R$ (e.g., normal convex fuzzy sets) where arithmetic is defined via $\alpha$-cuts \cite{lopez1997constructive}, $M\subseteq\mathcal B_b(S,\mathcal S)$ is used here only as a representation space: fuzzy subsets are identified with normalized $[0,1]$-valued functions. Hence $M$ is not closed under generic linear combinations, while it is closed under the pointwise supremum $\vee$. In our setting, no further geometric structure is needed.
\end{remark}

\begin{definition}
\textit{Granular count.} 
In a precise setting, the count of a referent $r$ (e.g., a gene expression) in a set $R$ emerges as the number $y \in S$ of observations $o$ in a set $O$ (e.g. the reads resulting from RNA-seq) that are assigned to the referent.
If observations are imprecise, the assignment is uncertain because they can be possibly assigned to more referents in $R$.
A possibilistic approach to counting enables deriving the possibility degree that a referent is assigned $y$ out of $K$ available observations, from the possibility degree $\pi_o(r)$ that an observation $o$ is assigned to referent $r$ \cite{mencar2020granular}.
The result is a fuzzy set $\tilde{y}$ with membership function:
$$
\xi_r(y) = 
    \max_{O_y \subseteq O}\lbrace
        \min\lbrace
            \min_{o \in O_y}{\pi_o(r)},
            \min_{o \notin O_y}{
                \max_{r' \in R\setminus\{r\}}{\pi_o(r')}
            }
        \rbrace
    \rbrace
$$
if $y\leq K$, and $\xi_r(y)=0$ if $y>K$.
The variable $O_y$ denotes a subset of $O$ with cardinality $|O_y|=y$ (by convention, $\min\emptyset=1$). Figure S.1 in the Supplementary Materials shows two exemplary cases of granular counts.
\end{definition}

\begin{remark}
Granular counting represents crisp counts as a collection of compatible alternatives rather than a single point, a concept formalized through fuzzy counts that weight candidates to create a graded compatibility structure. Unlike probability, which distributes a fixed total mass, possibility measures the degree to which alternatives remain compatible with observations: maximal possibility indicates an alternative cannot be excluded, while lower values reflect a progressive epistemic discounting of the candidate. Section S4 contains further theoretical and computational details, including a synopsis of possibility distributions, an efficient algorithm for granular counting, and a graphical illustration of exemplary counts (Figure S1).
\end{remark}

\newpage
\section{Fuzziness as Coarsening-Not-At-Random}\label{sec3}

This section states and discusses the main results supporting the view of fuzziness as a coarsening-not-at-random (CNAR) mechanism. 

\subsection{The statistical problem}\label{sec3_1}

Let $\{Y_i\}_{i=1}^n$ be a collection of $n$ independent $(\mathcal A,\mathcal S)$-measurable random variables, and let $\widetilde{\mathbf y}=\{\widetilde y_i\}_{i=1}^n$ denote the observed sample of fuzzy data. Because of epistemic uncertainty mechanisms, such as those acting on RNA-seq data \cite{o2015accounting}, $\widetilde{\mathbf y}$ can be viewed as a {imprecise version} of the unobserved vector of crisp realizations $\mathbf y=\{y_i\}_{i=1}^n$. Our goal is to model the associated blurring mechanism, which, after the latent outcome $Y(\omega)=y$ is generated, reports a fuzzy subset of $S$ rather than the natural (non-fuzzy) count $y$. Equivalently, we aim to perform inference on the parameter vector $\boldsymbol{\theta}$ indexing the joint distribution $f_{Y_1,\ldots,Y_n}(\mathbf y;\boldsymbol{\theta})$ given the fuzzy sample $\widetilde{\mathbf y}$.

\subsection{A Zadeh-oriented construction}\label{sec3_2}

In what follows, the finite case is adopted to keep the construction elementary in the discrete setting. Let $\Xi$ denote the fuzzy outcome modeled as an $(M,\mathcal M)$-valued random element. Conditionally on $Y=y$, $\Xi$ has distribution $\phi(y,\cdot)$, where $\phi:S\times\mathcal M\to[0,1]$ is a Markov kernel from $(S,\mathcal S)$ to $(M,\mathcal M)$, i.e. $\phi(y,A)=\mathbb P(\Xi\in A\mid Y=y)$ for $A\in\mathcal M$. In this setting, $\phi$ represents the fuzzy reporting mechanism. We also impose the support constraint $\phi\big(y,\{\xi\in M:\xi(y)>0\}\big)=1$ for all $y\in S$, so that outcomes incompatible with $y$ have zero probability. To exploit the fuzzy information $\xi$, let $\nu$ be a reference probability mass function on $M$ and define $c(y)\mathDef \sum_{\xi \in M} \xi(y)\,\nu(\xi)$, with $c(y)>0$ for all $y\in S$.\footnote{In this context, $\nu$ is a baseline distribution over the set of possible fuzzy reports $M$ and, in general, it plays the role of a prior over $M$.} Then set $\phi(y,A)\mathDef \frac{1}{c(y)}\sum_{\xi \in A} \xi(y)\,\nu(\xi)$, $A\in\mathcal M$. It is straightforward to show that for fixed $y\in S$, $\phi(y,\cdot)$ is a probability measure on $\mathcal M$ because it is a normalized finite sum of non-negative weights. Similarly, since $\mathcal S=\mathcal P(S)$, every function from $S$ to $\mathbb R$ is $\mathcal S$-measurable, hence $\phi(\cdot,A)$ is $\mathcal S$-measurable for each $A\in\mathcal M$. Note that the support constraint is inherently satisfied by this construction, as any $\xi$ such that $\xi(y)=0$ provides no contribution to the sum.

The proposed form of $\phi$ is rooted in three simple requirements: the reported fuzzy outcome should be compatible with the latent count $y$, reports assigning higher membership to $y$ should receive greater conditional weight, and unaccounted heterogeneity across admissible reports should be represented through a baseline distribution $\nu$. The kernel above is the simplest specification satisfying these requirements. In doing so, the generative link $y\mapsto\Xi$ explicitly incorporates the graded information encoded by $\xi$. Otherwise, the relation between the latent count and its fuzzy report would ignore the membership profile of $\xi$, effectively reducing to a set-valued coarsening scheme and squandering the added value of granular counts. A further technical argument in favour of this choice is that under this construction the marginal distribution of the fuzzy outcome $\mathbb P_\theta[\Xi \in A] = \sum_{y\in S}\phi(y,A)\mathbb P_\theta[Y=y]$ recovers the Zadeh probability of the fuzzy subset (see Definition 4). In particular, for a singleton $A=\{\xi\}$, the marginal is $\mathbb P_\theta[\Xi = \xi] = \nu(\xi)\sum_{y\in S}\frac{1}{c(y)}\xi(y)\mathbb P_\theta[Y=y]$. If $c(y)$ is constant in $y$ and $\nu$ is uniform on $M$, then $\mathbb P_\theta[\Xi = \xi] = \frac{1}{|M|c}C_\theta(\xi)$ is a Zadeh-type functional on fuzzy counts scaled by the factor $\frac{1}{|M|c}$. Notably, this allows for the fuzzy-event likelihood of \cite{gil1988operative} as a special case.

\subsection{The CNAR nature of the construction}\label{sec3_3}

We note that the general construction above generally entails a coarsening-not-at-random (CNAR) mechanism, in line with the characterizations in \cite{grunwald2003updating} and \cite{gill2008algorithmic} (a brief summary is provided in Section S1). 

More formally, consider $A=\{\xi\}$ and define the compatibility set $S_\xi \mathDef \{y\in S:\xi(y)>0\}$. We say that CAR holds for $\xi$ if $\phi(y,\{\xi\})=\phi(y',\{\xi\})$, for all $y,y'\in S_\xi$ (i.e., the probability of reporting $\xi$ does not depend on the specific value of $y$). However, under the Zadeh-oriented construction of Section~\ref{sec3_2}, the conditional probability of reporting $\xi$ varies with $y$ through the factor $\xi(y)/c(y)$. This immediately suggests that CAR typically fails whenever $\xi$ is not constant over $S_\xi$. 

\begin{proposition}[Characterization of outcome-wise CAR]\label{lem1}
Assume $\nu(\xi)>0$ and $c(y)>0$. Under the Zadeh-oriented construction, the mechanism is CAR in the singleton sense for the outcome $\xi$ if and only if $\xi(y)/c(y)$ is constant over $S_\xi$. 
\end{proposition}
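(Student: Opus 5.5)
The plan is to compute the singleton kernel explicitly and compare values across the compatibility set. By the definition of the Zadeh-oriented kernel in Section~\ref{sec3_2}, taking $A=\{\xi\}$ gives, for every $y\in S$,
\[
\phi(y,\{\xi\}) \;=\; \frac{1}{c(y)}\,\xi(y)\,\nu(\xi) \;=\; \nu(\xi)\,\frac{\xi(y)}{c(y)} .
\]
So the probability of reporting $\xi$ given $Y=y$ factors into a term $\nu(\xi)$ that does not depend on $y$ and the ratio $\xi(y)/c(y)$. This ratio is well defined because $c(y)>0$ by assumption.

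For the ``if'' direction, suppose $\xi(y)/c(y)=\kappa$ for all $y\in S_\xi$. Then for any $y,y'\in S_\xi$ we get $\phi(y,\{\xi\})=\nu(\xi)\kappa=\phi(y',\{\xi\})$. This is exactly the singleton CAR condition of Section~\ref{sec3_3}.

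For the ``only if'' direction, suppose $\phi(y,\{\xi\})=\phi(y',\{\xi\})$ for all $y,y'\in S_\xi$. Then $\nu(\xi)\,\xi(y)/c(y)=\nu(\xi)\,\xi(y')/c(y')$. Since $\nu(\xi)>0$, we can divide it out, which shows that $\xi(\cdot)/c(\cdot)$ is constant on $S_\xi$. The assumption $\nu(\xi)>0$ is used only at this step. Without it, $\phi(y,\{\xi\})\equiv 0$, CAR would hold trivially, and the equivalence would fail. I will state this explicitly.

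There is no substantial obstacle; the argument is essentially a direct computation. The one point that needs care is the role of $S_\xi$. Outside $S_\xi$ we have $\xi(y)=0$, so $\phi(y,\{\xi\})=0$ automatically, consistent with the support constraint. The CAR comparison is therefore made only over $S_\xi$, and the constancy condition should be stated on $S_\xi$ rather than on all of $S$. I will also note, as a corollary in the spirit of the surrounding discussion, the case where $c$ is constant on $S_\xi$. There the condition reduces to $\xi$ being constant on its support, i.e.\ a crisp indicator of $S_\xi$. So any genuinely graded $\xi$ yields CNAR.
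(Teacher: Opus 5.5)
Your proposal is correct and takes the same approach as the paper, whose proof is the one line ``immediate from the definition of $\phi(y,\{\xi\})$''. Your explicit computation $\phi(y,\{\xi\})=\nu(\xi)\,\xi(y)/c(y)$, with $\nu(\xi)>0$ cancelled in the ``only if'' direction, spells out exactly that step; in your corollary, note that concluding $\xi$ is the crisp indicator of $S_\xi$ also uses the normalization $\sup_{y}\xi(y)=1$ built into $M$.
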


\begin{proof}
Immediate from the definition of $\phi(y,\{\xi\})$.
\end{proof}

\noindent \textit{Example.} Let $S=\{0,1,2,3\}$, $M=\{\xi_1,\xi_2\}$, $\mathcal M=\mathcal P(\{\xi_1,\xi_2\})$, and take $\nu(\xi_1)=\nu(\xi_2)=\frac{1}{2}$.
Define the membership values by $\xi_1(0)=1,\ \xi_1(1)=\frac{1}{2},\ \xi_1(2)=\frac{1}{2},\ \xi_1(3)=\frac{1}{4}$ and $\xi_2(0)=\frac{1}{4},\ \xi_2(1)=\frac{1}{2},\ \xi_2(2)=1,\ \xi_2(3)=1$.
Then $c(y)=\frac{1}{2}(\xi_1(y)+\xi_2(y))$ and, for the singleton event $A=\{\xi_1\}$, the kernel gives $\phi(y,A)=\frac{\xi_1(y)}{(\xi_1(y)+\xi_2(y))}$. The compatibility set is $S_{\xi_1}=S$.
In particular, $\phi(0,\{\xi_1\})=\frac{4}{5}$ while $\phi(3,\{\xi_1\})=\frac{1}{5}$. Hence CAR fails.\\

This characterization clarifies why CAR is exceptional under fuzzy reporting. Indeed, once the reporting mechanism genuinely exploits graded membership, the resulting coarsening mechanism is typically non-ignorable. In this sense, CNAR is not a pathological feature of the proposed construction, but the generic consequence of linking fuzzy reports to latent counts through their compatibility profile. The inferential implication is immediate: when reporting is non-ignorable, inference on $\boldsymbol\theta$ cannot rely on the latent count model alone. Rather, as in MNAR models \cite{molenberghs2005models}, one must specify the measurement model $Y\sim\mathbb P_\theta$ together with the coarsening mechanism $\Xi\mid(Y=y)\sim \phi(y,\cdot)$. This is the rationale for the hierarchical model developed in Section \ref{sec3_4}.

\subsection{A CNAR model instance}\label{sec3_4}

We now specialize the general construction to a Beta-type parametric family $\xi_{c,h}$ of fuzzy sets, which will later be used in the RNA-seq application. Let $M_\text{be} \subset M$ denote the class of Beta-type possibility functions. Each fuzzy outcome $\xi$ is parametrized by two coordinates $(c,h)$, where $c\in[0,K]$ and $h>0$. Let $\Omega_M=[0,K]\times(0,\infty)$ and let $\eta:\Omega_M\to M_\text{be}$ denote the deterministic map $(c,h)\mapsto \xi_{c,h}$. Conditionally on the latent count $Y_i\sim F_{Y_i}(y;\boldsymbol\theta)$, the coordinates of a fuzzy outcome are generated as follows: (i) $ H_i\sim \mathcal Ga(\alpha_h,\beta_h)$, (ii) $C_i\mid H_i,Y_i \sim \mathcal Be (h_i\bar y_i,\,h_i-h_i\bar y_i)$ with $\mathcal Ga$ being the Gamma distribution (rate parametrization), $\mathcal Be$ the Beta distribution, and $\bar z=z/K$. The observed fuzzy outcome is then $\Xi_i=\eta(K C_i,H_i)$.\footnote{This coordinate-based specification separates the pure aleatory component from the epistemic fuzziness mechanism. The conditional Beta law yields flexible, possibly skewed reports while keeping an explicit link between $(c,h)$ and $y$. Moreover, $h$ controls limiting regimes: large $h$ concentrates the report around $y$ (crisp limit), whereas small $h$ produces diffuse reports, suggesting that defuzzification may distort dispersion-related inference. See \cite{calcagni2025bayesianize} for details.} Further details are provided in Section S2.2.

\section{Case study}\label{sec4}

We now return to the motivating RNA-seq setting and illustrate and evaluate the proposed framework through a real case study. 

Data refer to $n=89$ RNA-seq samples from high-throughput sequencing of human pancreatic islets (GSE50244), generated on the Illumina HiSeq 2000 platform and originally analyzed to investigate genes influencing glucose metabolism \cite{fadista2014global}. Raw sequences were processed using the STAR aligner and RSEM, and the resulting transcriptomes were subsequently analyzed with the MultiDEA method for uncertainty quantification \cite{consiglio2016fuzzy} (see Sections S6.1--S6.2). Among the sequenced genes, we focus on HAS3 as an illustrative case study. Much like functional data analysis represents functional objects using low-dimensional basis representations, the raw granular counts for HAS3 (i.e., the observed possibility distributions) were approximated by Beta-type fuzzy sets (see Section S2.2), yielding a tractable parametric representation (see Section S5 and Section S6.3a). The final outcome variable consists of $n=77$ paired observed statistics $\{(c_i,h_i)\}_{i=1}^n$ from the original fuzzy counts (only complete cases were retained), which constitute the input data for the subsequent analyses. 

In this application, $F_{Y_i}(y;\boldsymbol\theta) \mathDef \mathcal{N}eg\mathcal{B}in(y;\mu_i,\kappa)$, where $\mu_i = u_i\exp\{\mathbf z_i\boldsymbol\beta\}$ is the linear predictor connecting the vector of covariates $\mathbf z_i$ to $\mathbb E[Y_i]$ scaled by the normalization factor $u_i$ (i.e., offset of the model) and $\kappa>0$ is the gene-specific dispersion parameter. Inference on $\boldsymbol\theta=\{\boldsymbol\beta,\kappa\}$, based on the observed statistics $\{(c_i,h_i)\}_{i=1}^n$, is carried out in a Bayesian setting via Hamiltonian Monte Carlo (see Section S6.4). To explore covariates associated with HAS3 expression, we considered a small hypothesis-driven set of models: a null model (M0), a model with HbA1c only (M1), a model adding BMI, age, and biological sex (M2), and a model further including the interaction HbA1c $\times$ biological sex (M3). Predictive comparison via PSIS-LOO-CV and WAIC selected M1, which is therefore used in the analyses below (technical details, diagnostics, and posterior summaries are reported in Sections S6.3b-e).

In what follows, we focus on how our framework relates to more general standard approaches. Since the competing approaches operate on different representations of the data, we study two distinct modeling choices. First, what is lost when fuzzy counts are compressed into point-valued proxies commonly used after RNA-seq quantification (Section 4.1). Second, what is lost when fuzziness is retained, but the reporting mechanism is treated as ignorable (Section 4.2). 

\subsection{Compressing fuzziness by scalar proxies}\label{sec4_1}

Here we examine what is lost when the observed fuzzy counts are replaced by external scalar proxies. This is obtained through defuzzification, by replacing $\{(c_i,h_i)\}_{i=1}^n$ with scalars $\{\bar c_i\}_{i=1}^n$ \cite{calcagni2024estimating}. At this representation level, the closest approach for comparison is provided by RSEM, which calculates expected counts using the EM algorithm \cite{li2011rsem}. Defuzzified and expected counts (see Figure S5) were used as input of the M1 model specification, whose parameters were estimated as previously done (see Table S2). The main discrepancy concerns the dispersion component, whereas the regression structure is comparatively less affected. In particular, the RSEM-based specification yields a smaller posterior mean for $\kappa$ ($\hat\kappa_{\text{RSEM}} = 1.66$) than the defuzzified specification ($\hat\kappa_{\text{Defuzz}} = 2.77$), together with a more concentrated posterior distribution. Thus, among scalar summaries, the defuzzified counts produce dispersion estimates that are larger and more uncertain than those obtained from the RSEM-based proxy. Relative to CNAR, these scalar-proxy approaches mostly differ in how they estimate the dispersion parameter $\kappa$ and quantifies its posterior uncertainty. The result is not surprising and it is in line with other findings \cite{calcagni2024estimating}, suggesting that defuzzification mainly affects second-order inference rather than first-order regression structures (see Section S6.3f). To explore whether defuzzification offers advantages over RSEM, we also performed a posterior predictive check (PPC) \cite{gelman1996posterior} and compare two posterior statistics, namely the scaled mean and the 80\% scaled IQR (see Section S6.3f). The results indicate that compared to RSEM, defuzzified counts maintain a closer link to the granular data, indicating that defuzzification, albeit reductive, still captures some features of the observed data (see Figure S6). 

\subsection{Retaining fuzziness under ignorability}\label{sec4_2}

We now keep the observed data in their original fuzzy form and examine the effect of ignoring the conditional reporting mechanism $\Xi \mid Y=y \sim \phi(y,\cdot)$ through a CAR-like specification. In particular, we consider two alternatives to CNAR, denoted CAR$^1_{\text{like}}$ and CAR$^2_{\text{like}}$ (see Section S3), and compare them by posterior predictive analyses. Figure \ref{fig1A} shows the predictive distributions for $(c,h)\in\Omega_M$ under the three model instances (see also Table S5). CAR$^1_{\text{like}}$ yields a predictive distribution that is too concentrated relative to the observed data, coherently with the fact that this formulation does not explicitly represent variability in the latent count. CAR$^2_{\text{like}}$ partly compensates for this through the additional dispersion parameter $\lambda\in(0,\infty)$, but the correction remains largely global: the resulting predictive distribution still resembles a single Beta-like shape with a sharper peak and thinner shoulder regions than under CNAR. In this sense, the CAR-like models can recover some broad location/dispersion features, but they reproduce the observed fuzzy sample less faithfully at the level of the joint $(c,h)$ structure. We therefore complement this comparison with an energy-like analysis at the level of the full fuzzy outcomes $\xi_{c,h}\in M_{\mathrm{be}}$ (see Section S6.3g). The aim is to assess whether the model reproduces the internal structure of the observed fuzzy sample, rather than only a few marginal summaries. To this end, we compare the discrepancy measures $u_{\mathrm{rep}}$ and $u_{\mathrm{cross}}$ with $u_{\mathrm{obs}}$: values of $u_{\mathrm{cross}}$ close to $u_{\mathrm{obs}}$ indicate that replicated fuzzy counts are structurally compatible with the observed sample. Figure \ref{fig1B} shows that CNAR yields replicated samples more closely aligned with $u_{\mathrm{obs}}$, whereas the CAR-like alternatives remain systematically farther away. Thus, treating fuzziness as ignorable may preserve some coarse features of the data, but it distorts the local shape and within-sample structure of the observed granular counts.

\begin{figure}
\hspace*{-1cm}
    \begin{subfigure}[b]{0.48\textwidth}
        \centering
        \includegraphics[width=1.15\textwidth]{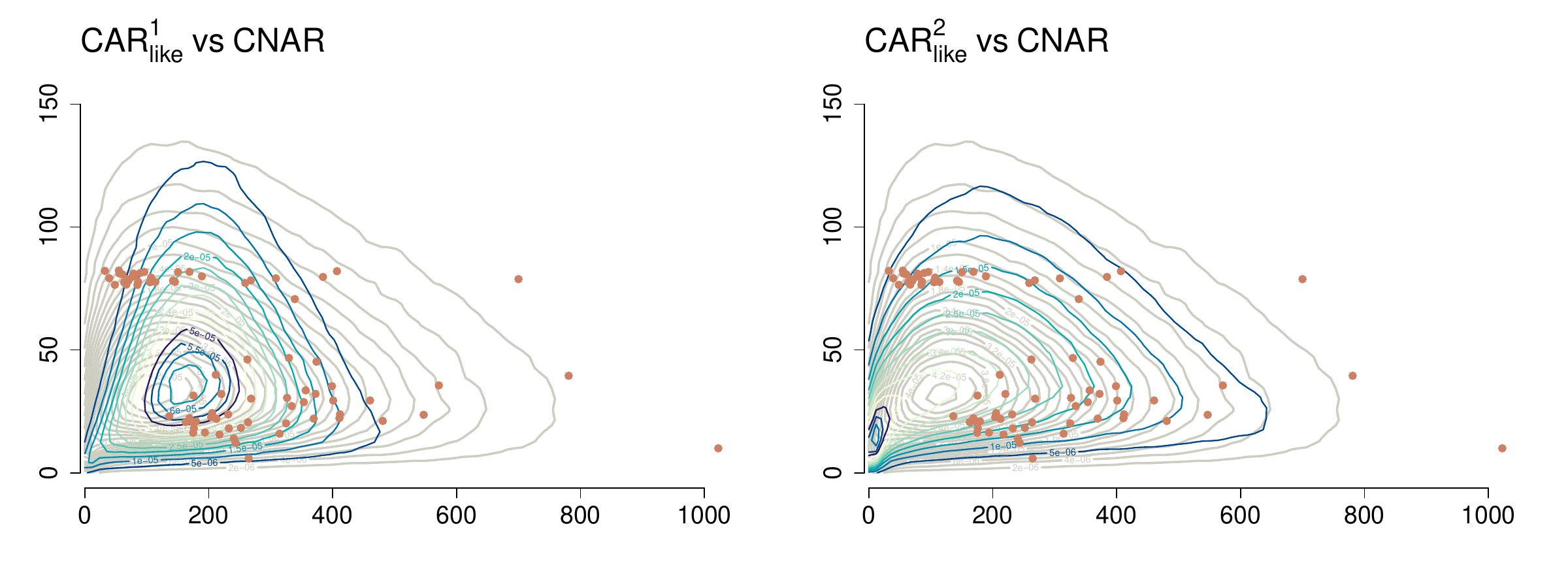}
        \caption{Comparison of observed vs. predicted $(c,h)\in\Omega_M$.}
        \label{fig1A}
    \end{subfigure}
    \hfill 
    \begin{subfigure}[b]{0.48\textwidth}
        \centering
        \includegraphics[width=1.15\textwidth]{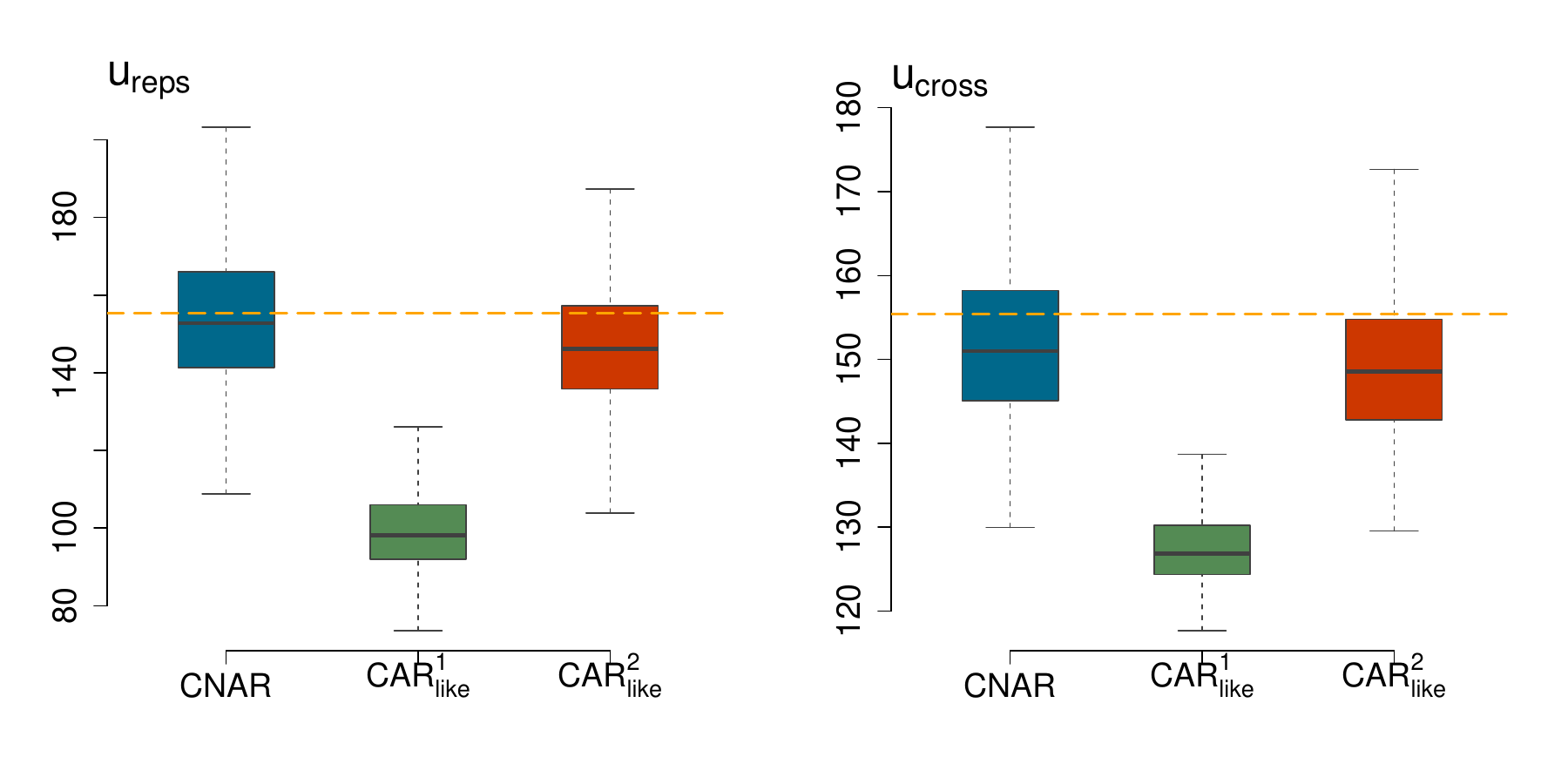}
        \caption{Comparison in terms of energy-{like} components (dashed lines represent $u_{\text{obs}}$).}
        \label{fig1B}
    \end{subfigure}
    
    \caption{Case study: Comparative analysis between CNAR and CAR-like model instances.}
    \label{fig1}
\end{figure}

\subsection{Results in brief}\label{sec4_3}

Taken together, the empirical comparisons point to a consistent pattern. When fuzzy counts are compressed into scalar proxies, the main inferential loss concerns dispersion and its uncertainty rather than the first-order regression signal: relative to RSEM expected counts, defuzzified counts remain closer to the observed granular data and yield less compressed dispersion inference. When fuzziness is retained but treated as ignorable, the loss shifts from scalar dispersion summaries to the structure of the fuzzy sample itself: the CAR-like formulations reproduce some broad features of the data, but they are less successful than CNAR at matching the observed joint $(c,h)$ distribution and the internal structure of the fuzzy counts. The practical takeaway is that ignoring fuzziness matters less for the regression trend and more for capturing the right dispersion and the structural link between observed and replicated outcomes.

\section{Conclusions}\label{sec5}

In this paper, we argue that fuzziness in granular counts is not an artifact, but rather the result of an informative coarsening process. Motivated by RNA-seq data, the main theoretical contribution of the paper is developed in Section~\ref{sec3}, where we introduce a general class of fuzzy-reporting mechanisms based on a Zadeh-oriented use of graded membership. A central implication of this construction is that ignorability fails generically. Indeed, since the probability of observing a given fuzzy report typically depends on the latent outcome itself, the mechanism is coarsening-not-at-random, except in special cases. We believe that this point is of crucial importance: the non-ignorability of the data arises already at the level of granular counting. In this sense, the hierarchical representation reveals CNAR to be a logical consequence of granular counts once graded compatibility is included in the model specification.

\clearpage
\bibliographystyle{plain}
\bibliography{biblio}

\end{document}